\newcommand\thetarel{\operatorname{\theta}}
\newcommand\notthetarel{\operatorname{\cancel\theta}}
\newcommand\thetahrel{\operatorname{\hat\theta}}
\newcommand\gammarel{\operatorname{\gamma}}
\newcommand\gammahrel{\operatorname{\hat\gamma}}
\newcommand\notgammahrel{\operatorname{\cancel{\hat\gamma}}}
\newcommand\coorddiff[3]{{D_{#1}({#2},{#3})}}
\newcommand\edgediff[5]{\left[d_{#1}({#2}, {#4}) - d_{#1}({#2},{#5})\right] - \left[d_{#1}({#3},{#4}) - d_{#1}({#3},{#5})\right]}
\renewcommand\tilde{\widetilde}
\newcounter{names}
\newtheorem{thm}[names]{Theorem}
\newtheorem{cor}[names]{Corollary}
\newtheorem{lem}[names]{Lemma}
\newtheorem{definition}[names]{Definition}
\title{Isometric Hamming embeddings of weighted graphs}
\author[1]{Joseph Berleant\thanks{To whom correspondence should be addressed: jberleant@gmail.com}}
\author[2]{Kristin Sheridan\thanks{Research performed while at the Department of Electrical Engineering and Computer Science, Massachusetts Institute of Technology, Cambridge, MA}}
\author[3]{Anne Condon}
\author[4]{Virginia Vassilevska Williams}
\author[1]{Mark Bathe}
\affil[1]{Department of Biological Engineering, Massachusetts Institute of Technology, Cambridge, MA}
\affil[2]{Department of of Computer Science, University of Texas at Austin}
\affil[3]{Department of Computer Science, University of British Columbia, Vancouver, Canada}
\affil[4]{Computer Science and Artificial Intelligence Laboratory, Massachusetts Institute of Technology, Cambridge, MA}
\begin{document}
\date{}
\maketitle

\begin{abstract}
A mapping $\alpha : V(G) \to V(H)$ from the vertex set of one graph $G$ to another graph $H$ is an \emph{isometric embedding} if the shortest path distance between any two vertices in $G$ equals the distance between their images in $H$. Here, we consider isometric embeddings of a weighted graph $G$ into unweighted Hamming graphs, called Hamming embeddings, when $G$ satisfies the property that every edge is a shortest path between its endpoints. Using a Cartesian product decomposition of $G$ called its \emph{pseudofactorization}, we show that every Hamming embedding of $G$ may be partitioned into Hamming embeddings for each irreducible pseudofactor graph of $G$, which we call its \emph{canonical partition}. This implies that $G$ permits a Hamming embedding if and only if each of its irreducible pseudofactors is Hamming embeddable. This result extends prior work on unweighted graphs that showed that an unweighted graph permits a Hamming embedding if and only if each irreducible pseudofactor is a complete graph. When a graph $G$ has nontrivial pseudofactors, determining whether $G$ has a Hamming embedding can be simplified to checking embeddability of two or more smaller graphs.
\end{abstract}






\section{Introduction}

\emph{Isometric embeddings}, or distance-preserving mappings from the vertices of one graph to another, are well studied for unweighted graphs but remain relatively unstudied for weighted graphs. Such embeddings have important applications in molecular engineering to design sets of DNA strands with pre-specified binding strengths that generate nontrivial emergent behaviors \cite{eigen1981transfer, short2012promiscuous, zhu2010noncognate, antebi2017combinatorial, malinauskas2014extracellular}, with utility in DNA memory \cite{baum1995building, neel2006semantic, bee2021molecular} and DNA reaction cascades \cite{qian2011scaling, zhang2011dynamic, qian2011neural, cherry2018scaling}. More generally, isometric embeddings are useful whenever a graph's distance metric is of primary interest, and representation in another graph can simplify analysis or manipulation of graph distances. Of particular interest are embeddings into Hamming graphs, or products of complete graphs. For example, this task appears in communication theory because Hamming graphs permit maximally efficient information routing without inspecting the global network structure \cite{graham1971addressing}; in linguistics to relate the closeness of linguistic objects to simpler predicate vector models \cite{firsov1965isometric}; and in coding theory to optimize error-checking codes based on Hamming distance \cite{kautz1958unit}.

Finding isometric embeddings into a particular destination graph or determining their existence is nontrivial, even with simple graphs like Hamming graphs. A large body of work addresses isometric embeddings of unweighted graphs \cite{djokovic1973distance,winkler1984isometric,wilkeit1990isometric,graham1985isometric,feder} but studies of weighted graphs have considered only embeddings of limited classes of graphs into hypercubes \cite{ Shpectorov, DezaLaurentL1,LaurentFewDistances}. Our interest in the isometric graph embedding problem initially stemmed from molecular engineering, in which the metrics we wish to embed are generally complex and rarely fall into a previously studied graph type (Figure \ref{fig:weighted-graphs}a). Na\"ive attempts to convert weighted graphs into unweighted graphs via edge subdivision can produce unweighted graphs without an isometric embedding when the original weighted graph did permit such an embedding (Figure \ref{fig:weighted-graphs}b). In addition, weighted graphs may have multiple non-equivalent isometric embeddings into a single destination graph (Figure \ref{fig:weighted-graphs}c), which does not occur with unweighted graphs \cite{graham1985isometric}. Indeed, the concept of pseudofactorization, which is central to the study of isometric embeddings for unweighted graphs, has only recently been extended to weighted graphs \cite{sheridan2021factorization}.

\begin{figure}
    \centering
    \includegraphics[width=\linewidth]{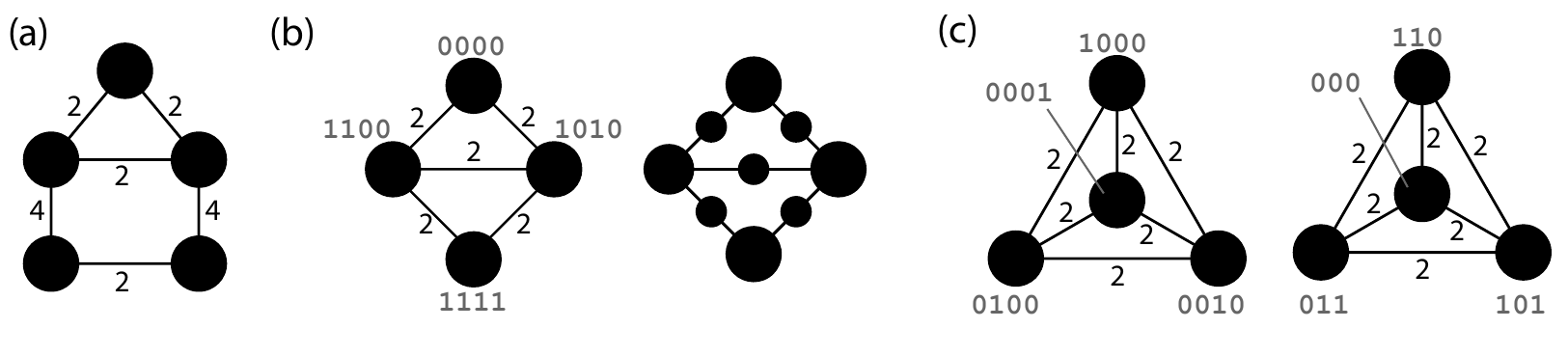}
    \caption{Embedding weighted graphs is more difficult than unweighted graphs, and some guarantees for unweighted graphs no longer hold. a) A simple graph which is embeddable in a hypercube, but which is not covered by previous theorems on hypercube embeddings of weighted graphs. This is hypercube embeddable because it is isometrically embeddable into $(K_2)^4 \times (K_3)^2$. b) A weighted graph that permits a hypercube embedding, but for which the unweighted graph generated via edge subdivision is not. c) A weighted graph, $K_4$ with uniform edge weights of 2, for which multiple non-equivalent hypercube embeddings exist.}
    \label{fig:weighted-graphs}
\end{figure}

The results of this paper apply specifically to those weighted graphs for which each edge is a shortest path between its endpoints, which we call \emph{minimal graphs}. Such graphs are natural to study in the context of isometric embeddings because an edge with weight greater than the distance between its endpoints will not affect the graph's shortest path metric. When only the distances in a graph are of interest, our results also apply to any graph with positive edge weights, because the graph can be made minimal without affecting distances through edge removal. Similarly, our results can apply to any finite metric space by representing it in a minimal graph. Notably, multiple graphical representations generally exist for a given finite metric space, and each may have different embedding properties \cite{imrich1984optimal}.

\subsection{Prior work}

\subsubsection{Pseudofactorization of graphs}

The \emph{Cartesian graph product} of $k\geq 1$ graphs is a graph whose vertex set is the Cartesian product of the vertex sets of each factor graph, and whose edge set is such that each edge of the product graph corresponds to a single edge from a single one of the factors (Figure \ref{fig:he-structure}a).

Cartesian graph products have the important property that every path in the product graph can be decomposed into paths in the factor graphs. This property makes Cartesian products relevant to isometric embeddings, because the the product graph's shortest path distance metric is represented in the distance metrics of the factor graphs. Following the convention of Sheridan et al. \cite{sheridan2021factorization}, a \emph{pseudofactorization} of $G$ is any collection of graphs such that $G$ is isomorphic to an isometric subgraph of their Cartesian product\footnote{The term pseudofactorization is used to distinguish from a \emph{factorization} of a graph $G$, for which $G$ is isomorphic to the full Cartesian product of its factors.}. An \emph{irreducible} graph is one whose pseudofactorizations always include itself. In their work on unweighted graphs, Graham and Winkler \cite{graham1985isometric} found that every connected, unweighted graph has a unique \emph{canonical pseudofactorization} into irreducible graphs, called its canonical pseudofactors (Figure \ref{fig:he-structure}b).

Sheridan et al. \cite{sheridan2021factorization} generalized the notion of pseudofactorization to minimal weighted graphs. In close analogy to unweighted graphs, they showed that any connected minimal graph has a unique canonical pseudofactorization into irreducible minimal graphs, and that the canonical pseudofactorization can be found in polynomial time. To do so, they made use of the Djokovi\'c-Winkler relation $\theta$ on the edge set of a graph, and its transitive closure $\hat\theta$ (Figure \ref{fig:he-structure}c). In particular, they showed that, given any minimal graph, there is a bijection between its equivalence classes under $\hat\theta$ and its canonical pseudofactors.

\subsubsection{Isometric embeddings of graphs into hypercubes and Hamming graphs}

A consequence of Graham and Winkler's work \cite{graham1985isometric} on the canonical pseudofactorization of unweighted graphs is that an unweighted graph permits an isometric embedding into a Hamming graph if and only if every canonical pseudofactor is a complete graph. 

For weighted graphs, work on finding isometric embeddings has been confined to isometric embeddings into hypercubes. In general, determining if an isometric embedding into a hypercube exists for an arbitrary weighted graph is NP-hard \cite{Chvatal}. Efficient (i.e. polynomial-time) methods for determining hypercube embeddability are known for limited classes of weighted graphs. 

Shpectorov \cite{Shpectorov} characterized all graphs with uniform edge weights that could be isometrically embedded into a hypercube, showing that each must be an isometric subgraph of a Cartesian product of complete graphs, cocktail party graphs, and half-cube graphs. Such graphs may be recognized in polynomial time \cite{imrichLinearFactoring,ImrichHammingRecognize,imrichShorterHalved}. Additional weighted graphs for which the existence of isometric embeddings into hypercubes may be determined in polynomial time include line graphs and cycle graphs \cite{DezaLaurentL1}, as well as graphs whose distances are all in the set $\{x, y, x+y\}$ for integers $x,y$ at least one of which is odd \cite{LaurentFewDistances}.

With weighted graphs the isometric embedding of a graph may no longer be unique, as noted previously (e.g., Figure \ref{fig:weighted-graphs}c). A graph with a unique isometric embedding into a hypercube is said to be $l_1$-rigid, and Deza and Laurent \cite{DezaLaurentL1} showed that any uniformly weighted $l_1$-rigid graph must be an isometric subgraph of a uniformly weighted half-cube. Deza and Laurent also considered counting the number of non-equivalent isometric embeddings for uniformly weighted complete graphs, and showed that $K_4$ with uniform edge weights of integer $2k$ always has $k+1$ unique isometric embeddings into hypercubes \cite{laurent1993variety}.

\subsection{Our results}

In this work, we develop a formal relationship between pseudofactorization and isometric embeddings of minimal weighted graphs into unweighted Hamming graphs, called \emph{Hamming embeddings}. All of our results may also be extended to isometric embeddings into hypercubes, called \emph{hypercube embeddings}. One of the two main results of this paper is the following theorem, which states that a minimal graph $G$ permits a Hamming embedding if and only if each of its canonical pseudofactors permits a Hamming embedding. This result may be contrasted with the unweighted case \cite{graham1985isometric}, in which each canonical pseudofactor must \emph{be} a complete graph.

\begin{thm}\label{thm:intro:hamming-embedding}
A minimal weighted graph $G$ has a Hamming embedding if and only if a Hamming embedding exists for each of its canonical pseudofactors. Similarly, $G$ has a hypercube embedding if and only if each of its canonical pseudofactors does.
\end{thm}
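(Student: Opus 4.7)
I split the proof into the two implications and note at the outset that the hypercube statement follows from the Hamming statement as a specialization, since a Cartesian product of hypercubes is again a hypercube and any coordinate subproduct of a hypercube is a hypercube.

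For the ``if'' direction, suppose each canonical pseudofactor $G_i$ admits a Hamming embedding $\alpha_i : G_i \hookrightarrow H_i$. Since a Cartesian product of Hamming graphs is a Hamming graph, the product map $\prod_i \alpha_i$ isometrically embeds $\prod_i G_i$ into the Hamming graph $\prod_i H_i$. By the definition of pseudofactorization, $G$ is isomorphic to an isometric subgraph of $\prod_i G_i$, so composing with the corresponding inclusion yields a Hamming embedding of $G$. This direction is essentially formal.

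For the ``only if'' direction, let $\alpha : G \to K_{n_1} \cartprod \cdots \cartprod K_{n_m}$ be a Hamming embedding. The plan is to construct a partition $\{1,\dots,m\} = I_1 \sqcup \cdots \sqcup I_k$ of the coordinate set indexed by the canonical pseudofactors $G_1,\dots,G_k$, and then show that restricting $\alpha$ to the coordinates in $I_i$, after factoring through the pseudofactor quotient $G \to G_i$, gives a Hamming embedding $\alpha_i : G_i \hookrightarrow \prod_{j \in I_i} K_{n_j}$. The natural candidate for this partition comes from associating each edge $uv$ of $G$ with the set $D(u,v) \subseteq \{1,\dots,m\}$ of coordinates where $\alpha(u)$ and $\alpha(v)$ disagree; since $\alpha$ is isometric, $|D(u,v)| = w(uv)$.

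The crux of the argument, and the step I expect to be the main obstacle, is showing that $D(u,v)$ depends only on the $\hat\theta$-equivalence class of $uv$. I would first establish this for $\theta$-related edges by expanding the weighted Djokovi\'c-Winkler definition in terms of distance inequalities and using the Hamming identity that distance equals the number of disagreeing coordinates; any coordinate mismatch between the two edges' images would produce a shortcut violating isometry of $\alpha$. Transitivity then propagates the equality across each $\hat\theta$-class, and combining with the stated bijection between $\hat\theta$-classes and canonical pseudofactors yields the desired partition. Once this is in hand, verifying that each $\alpha_i$ is a well-defined isometric embedding is routine: a shortest path in $G_i$ lifts to a sequence of edges of $G$ whose $\hat\theta$-classes lie in $I_i$, and the corresponding coordinate changes in $\alpha$ contribute exactly the distance measured in $G_i$.
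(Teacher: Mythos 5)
Your ``if'' direction (concatenating Hamming embeddings of the pseudofactors through the canonical embedding of $G$ into $\prod_i G_i$) and your reduction of the hypercube claim to the Hamming claim both match the paper. The problem is the step you yourself flag as the crux of the ``only if'' direction: the claim that $D(u,v)$ depends only on the $\hat\theta$-class of $uv$ is false, so the ``any coordinate mismatch gives a shortcut'' argument cannot succeed. Take $K_4$ with all edge weights equal to $2$ (the paper's Figure~\ref{fig:weighted-graphs}c): it is irreducible, so all six edges form a single $\hat\theta$-class, yet in the hypercube embedding $a\mapsto 000$, $b\mapsto 011$, $c\mapsto 101$, $d\mapsto 110$ one has $D(a,b)=\{2,3\}$ while $D(a,c)=\{1,3\}$, even though $ab$ and $ac$ are directly $\theta$-related. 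The statement that is both true and sufficient is weaker: the coordinates touched by the edges of one $\hat\theta$-class form a single block, and the blocks of distinct classes are disjoint. The paper encodes this via a relation $\gamma$ on coordinates ($j \gammarel j'$ iff some single edge changes both digits) and proves $uv\,\thetahrel\,u'v' \iff j\,\gammahrel\,j'$ for $j\in D(u,v)$, $j'\in D(u',v')$ (Lemma~\ref{lem:structure:1}).

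Moreover, your sketch only gestures at the forward implication, which is the easy one: if no $\gamma$-chain links the two coordinate sets then they are disjoint, and the alternating sum defining $\theta$ vanishes coordinate by coordinate. The direction you actually need for your blocks $I_1,\dots,I_k$ to be pairwise disjoint is the converse: a single coordinate cannot change across edges lying in two different $\hat\theta$-classes. This is the genuinely hard part --- the paper proves it by induction on the length of a path joining the two edges (Case 1 of Lemma~\ref{lem:structure:1}) --- and nothing in your outline substitutes for it. Without disjointness (together with the observation that every coordinate changes across some edge, so the blocks cover all $m$ coordinates), the identity $d_G(u,v)=\sum_i d_{\alpha_i}(\pi_i(u),\pi_i(v))$ that forces each restricted map to be isometric on $G_i$ does not go through, so the ``routine'' final verification is not routine.
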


Theorem \ref{thm:intro:hamming-embedding} is proven by Theorem \ref{thm:hamming-iff} and Corollary \ref{cor:hypercube-iff} in Section \ref{sec:wg-structure}. We briefly sketch the proof and the novel contributions necessary for it here. A Hamming embedding of weighted graph $G = (V(G),E(G),w_G)$ can be written as a mapping $\eta : V(G) \to \Sigma^m$, where $m$ is the embedding \emph{dimension}, $\Sigma$ is the embedding \emph{alphabet}, and the distance between two elements of $\Sigma^m$ is given by Hamming distance. $\eta$ may be partitioned, that is, the columns may be grouped to form embeddings of lower dimension (though these are not generally isometric). The second main result of this manuscript is a proof of existence for a special \emph{canonical partition} of $\eta$ that provides isometric Hamming embeddings for each canonical pseudofactor of $G$. 

\begin{thm}\label{thm:intro:canonical-partition}
Let $G = (V(G),E(G),w_G)$ be a minimal weighted graph and $\eta$ a Hamming (hypercube) embedding of $G$. Let the canonical pseudofactors of $G$ be the graphs $G_1, \dots, G_n$. Then there exists a partition of $\eta$ into embeddings $\eta^1, \dots, \eta^n$ such that $\eta^i$ forms a Hamming (hypercube) embedding of $G_i$.
\end{thm}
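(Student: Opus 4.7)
The plan is to partition the columns of $\eta$ according to which $\hat\theta$-equivalence class of $E(G)$ each column cuts, and then verify that the restriction of $\eta$ to each block of columns descends to an isometric Hamming embedding of the corresponding canonical pseudofactor. For notation, let $F_1,\dots,F_n$ be the $\hat\theta$-equivalence classes on $E(G)$ corresponding to $G_1,\dots,G_n$, let $\pi_i:V(G)\to V(G_i)$ be the canonical projections, and for each column $c$ of $\eta$ set $E_c=\{uv\in E(G):\eta(u)_c\neq\eta(v)_c\}$ and $C(e)=\{c:e\in E_c\}$ for each edge $e=uv$. Isometry of $\eta$ gives $|C(e)|=w_G(e)$, and writing $N_c(P)$ for the number of edges of a path $P$ lying in $E_c$, the chain $d_G(u,v)=\sum_{e\in P}|C(e)|=\sum_c N_c(P)\geq\sum_c[\eta(u)_c\neq\eta(v)_c]=d_H(\eta(u),\eta(v))=d_G(u,v)$ forces $N_c(P)=[\eta(u)_c\neq\eta(v)_c]$ for every column $c$ and every shortest $uv$-path $P$.

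The main obstacle is the following key lemma: for every column $c$, the set $E_c$ is contained in a single $\hat\theta$-equivalence class. The tool is the expansion, via isometry, of the $\theta$-test quantity
\begin{equation*}
Q(e_1,e_2) = d(u_1,u_2) + d(v_1,v_2) - d(u_1,v_2) - d(v_1,u_2)
\end{equation*}
for edges $e_1=u_1v_1$ and $e_2=u_2v_2$, as a sum $\sum_{c'}q_{c'}$ of per-column contributions where each $q_{c'}$ depends only on the four values $(\eta(u_1)_{c'},\eta(v_1)_{c'},\eta(u_2)_{c'},\eta(v_2)_{c'})$. A direct case analysis shows $q_{c'}=0$ whenever $c'\notin C(e_1)\cap C(e_2)$, so only columns cutting both edges contribute to $Q$. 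Combined with the Sheridan et al.\ characterization of $\theta$ for weighted minimal graphs, a shared cut column supplies the definite nonzero contribution needed to force $e_1\theta e_2$; in corner cases where other shared-column contributions might cancel, I would chain through an intermediate edge obtained by taking one step along a shortest $u_1u_2$-path and conclude $e_1\hat\theta e_2$ by transitivity.

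Given the key lemma, each column $c$ with $E_c\neq\emptyset$ receives the unique index $i(c)$ with $E_c\subseteq F_{i(c)}$, while any remaining columns can be assigned arbitrarily since they contribute zero to every Hamming distance. The columns of index $i$ form a subset $\mathcal{C}_i$ of the columns of $\eta$. Any edge $e\notin F_i$ has $C(e)\cap\mathcal{C}_i=\emptyset$ by the key lemma, so the $\mathcal{C}_i$-restriction of $\eta$ takes equal values at the endpoints of every edge outside $F_i$, hence is constant on each connected component of the subgraph of $G$ obtained by deleting $F_i$, i.e.\ on each fiber of $\pi_i$. So the $\mathcal{C}_i$-restriction descends to a well-defined $\eta^i : V(G_i)\to\Sigma^{|\mathcal{C}_i|}$.

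Finally, for $u,v\in V(G)$ and any shortest $uv$-path $P$, the crossing count from the first paragraph gives
\begin{equation*}
d_H(\eta^i(\pi_i(u)),\eta^i(\pi_i(v))) = \sum_{c\in\mathcal{C}_i}[\eta(u)_c\neq\eta(v)_c] = \sum_{e\in P\cap F_i} w_G(e),
\end{equation*}
which equals $d_{G_i}(\pi_i(u),\pi_i(v))$ by the pseudofactorization property (the total weight of $F_i$-edges in a shortest path of $G$ is the $G_i$-distance between the projections). Hence each $\eta^i$ is an isometric Hamming embedding of $G_i$, and the hypercube statement follows as the specialization $|\Sigma|=2$.
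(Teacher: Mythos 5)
Your architecture is essentially the paper's: your ``key lemma'' (each column's cut set $E_c$ lies in a single $\hat\theta$-class) is exactly Case~1 of the paper's Lemma~\ref{lem:structure:1}, your column partition is the paper's canonical partition via the bijection of Theorem~\ref{thm:bijection}, and your descent-plus-crossing-count argument mirrors Theorem~\ref{thm:structure}. The problem is that the key lemma is the technical heart of the whole result, and your argument for it has a genuine gap exactly where you flag a difficulty. For a shared column $c'\in C(e_1)\cap C(e_2)$ the per-column contribution $q_{c'}$ to $Q(e_1,e_2)$ is \emph{not} of a definite sign in general: writing $a=\eta(u_1)_{c'}$, $b=\eta(v_1)_{c'}$, $x=\eta(u_2)_{c'}$, $y=\eta(v_2)_{c'}$ with $a\ne b$ and $x\ne y$, a case check gives $q_{c'}\in\{-2,-1,0,1,2\}$ (it is $0$ when $\{a,b\}\cap\{x,y\}=\emptyset$, and opposite signs occur for the two matchings $a=x,b=y$ versus $a=y,b=x$). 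So shared columns can contribute nothing or cancel, and indeed must be able to: two edges can share a column while being only $\hat\theta$-related, not $\theta$-related, so no argument can force $Q\ne 0$ directly. Your proposed repair --- chaining through the first edge of a shortest $u_1u_2$-path --- does not work as stated, because that edge need not share a cut column with $e_1$ or with $e_2$ (e.g.\ in a $4$-cycle the connecting edge belongs to the other $\theta$-class and cuts disjoint columns), so transitivity has nothing to grab onto.

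The paper closes this gap with an induction on the number of edges $l$ of a path $P$ that \emph{begins with} $e_1$ and \emph{ends with} $e_2$. If some interior edge of $P$ cuts a column of $S=C(e_1)\cap C(e_2)$, split $P$ at that edge: each half is strictly shorter, still begins and ends with edges sharing a column of $S$, and induction plus transitivity of $\hat\theta$ finishes. If no interior edge cuts a column of $S$, then for each $s\in S$ the digit $\eta_s$ agrees at $v_1$ and $u_2$, which pins down $b=x$ in the case analysis above and forces every shared-column contribution to be $\ge +1$, hence $Q\ne 0$ and $e_1\,\theta\,e_2$ directly. You would need to supply this (or an equivalent) argument for your key lemma to hold; the remainder of your proof (well-definedness of $\eta^i$ on fibers of $\pi_i$, and the squeeze identifying $\sum_{e\in P\cap F_i}w_G(e)$ with both $d_{\eta^i}$ and $d_{G_i}$) is sound modulo the citation of the fact that the $F_i$-weight of a shortest path equals the $G_i$-distance of the projections, which the paper itself has to establish via Theorem~\ref{thm:prelim:pseudofactor-path} and an isometry summation rather than quote.
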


Theorem \ref{thm:intro:canonical-partition} is proven by Theorems \ref{thm:bijection} and \ref{thm:structure}. To prove Theorem \ref{thm:intro:hamming-embedding}, we use Theorem \ref{thm:intro:canonical-partition} to show that if $G$ is Hamming embeddable then its pseudofactors are also Hamming embeddable. The converse is easily shown to be true, by concatenating Hamming embeddings of the canonical pseudofactors. To construct the canonical partition of $\eta$, we introduce a novel relation on the coordinates of $\eta = (\eta_1, \ldots, \eta_m)$, called $\gamma$. Informally, two coordinates are related by $\gamma$ if the corresponding digits of $\eta$ change across some edge in $E$. The transitive closure of $\gamma$ is an equivalence relation $\gammahrel$, and its equivalence classes define a partition of the digits of $\eta$, which is the canonical partition of $\eta$ (Figure \ref{fig:he-structure}c-d). Much of our effort is spent proving the existence of a bijection between the equivalence classes under $\thetahrel$ (i.e. sets of edges) and the equivalence classes under $\gammahrel$ (i.e. sets of coordinates), which we use to construct a bijection between the canonical pseudofactorization of $G$ and the canonical partition of any Hamming embedding of $G$. As a final step, we prove that the embeddings of this partition form Hamming embeddings of each canonical pseudofactor (Figure \ref{fig:he-structure}e).

\begin{figure}
    \centering
    \includegraphics{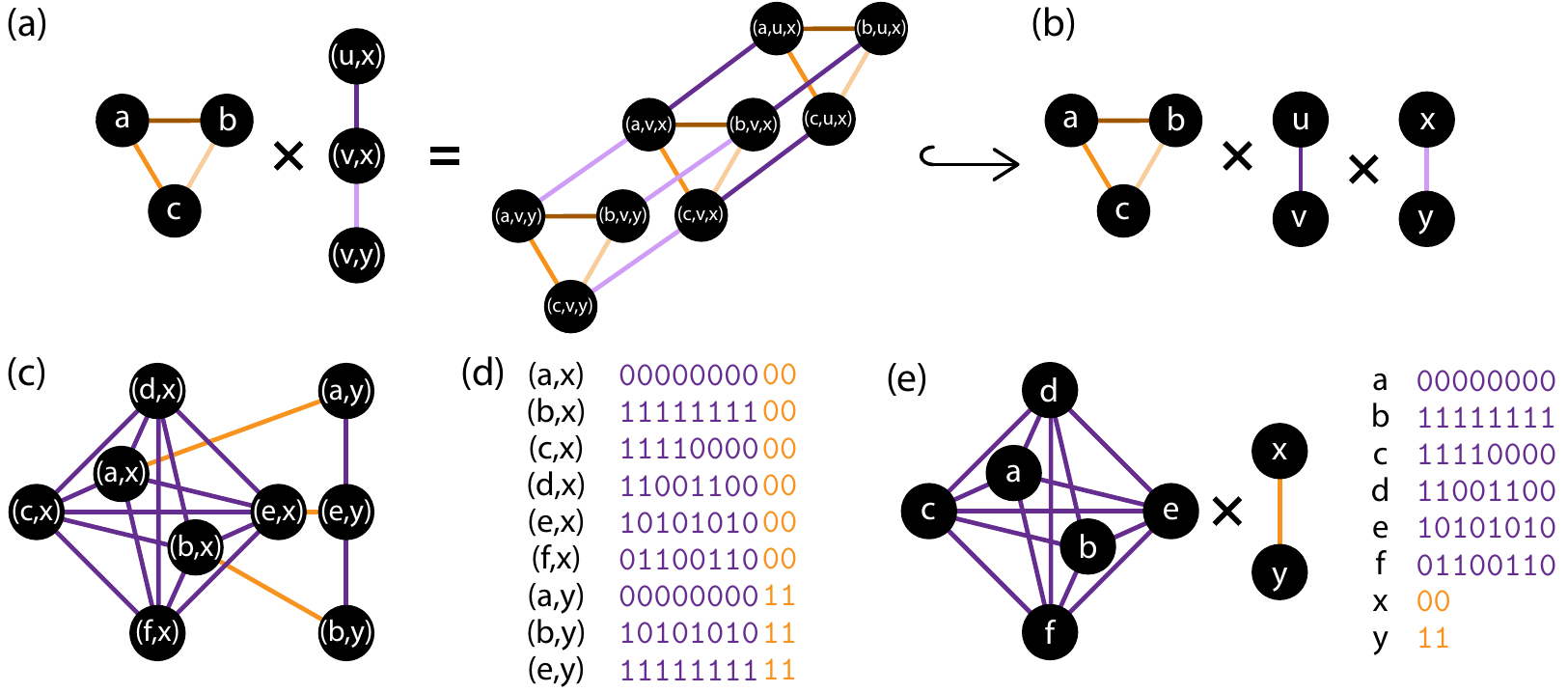}
    \caption{(a) Illustration of the Cartesian graph product. Edge colors indicate correspondence between edges in the factor and product graphs. If edges are weighted, two edges of the same color will have the same weight. (b) This graph product is isometrically embeddable in a product of three irreducible graphs ($K_3$, $K_2$, and $K_2$), which form its canonical pseudofactorization. (c) A weighted graph, with purple edges of weight 4 and orange edges of weight 2. Edge colors also indicate equivalence classes under the transitive closure $\thetahrel$ of the Djokovi\'c-Winkler relation. (d) A hypercube embedding of this graph, with digits colored to indicate equivalence classes under $\gammahrel$. $\gammahrel$ is the transitive closure of $\gamma$, which relates coordinates whose digits change together across some edge. (e) Our results show that the hypercube embedding in (d) may be partitioned into a hypercube embedding for each  of the two canonical pseudofactors. The same is true of any Hamming embedding of a minimal graph.}
    \label{fig:he-structure}
\end{figure}

Using Theorem \ref{thm:intro:canonical-partition}, we are also able to prove the following result on the number of non-equivalent Hamming embeddings or hypercube embeddings of $G$. 
We define equivalence of two Hamming embeddings in the same way as Winkler in \cite{winkler1984isometric}; the formal definition is given in Section \ref{sec:preliminaries}. Informally, two Hamming embeddings are equivalent if they can be made identical by permuting coordinates or coordinate values (i.e., swapping some number of $\eta_i$ and $\eta_j$ and/or swapping the values of a particular $\eta_i$). The following theorem is proven by Corollary \ref{cor:number-embeddings}.

\begin{thm}
Given a minimal weighted graph $G$, the number of non-equivalent Hamming embeddings of $G$ is the product of the number of non-equivalent Hamming embeddings of each of its pseudofactors. Similarly, the number of non-equivalent hypercube embeddings of $G$ is the product of the number of non-equivalent hypercube embeddings of each of its pseudofactors.
\end{thm}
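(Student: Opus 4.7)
The plan is to construct a bijection $\Phi$ between equivalence classes of Hamming embeddings of $G$ and tuples in $\prod_{i=1}^n \mathcal{E}_i$, where $\mathcal{E}_i$ denotes the set of equivalence classes of Hamming embeddings of the canonical pseudofactor $G_i$. By the product rule this immediately gives the desired count, and the hypercube statement follows by the same argument since each block of the canonical partition of a hypercube embedding is again a hypercube embedding.

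The forward map $\Phi$ sends an embedding $\eta$ to the tuple $(\eta^1, \dots, \eta^n)$ obtained from Theorem \ref{thm:intro:canonical-partition}. The bijection from Theorem \ref{thm:bijection}, matching $\gammahrel$-classes (blocks of $\eta$) to $\thetahrel$-classes (in bijection with the pseudofactors), determines which block is assigned to each $G_i$ without arbitrary choice. To check $\Phi$ descends to equivalence classes: if $\eta'$ is obtained from $\eta$ by permuting coordinates via $\sigma$ and/or relabeling values within individual coordinates, then $i$ and $j$ are $\gamma$-related in $\eta'$ iff $\sigma^{-1}(i)$ and $\sigma^{-1}(j)$ are $\gamma$-related in $\eta$ (value permutations preserve which coordinates differ across any given edge). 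Hence the $\gammahrel$-classes of $\eta'$ are the images under $\sigma$ of those of $\eta$ and are matched to the same $\thetahrel$-classes as before, so each $(\eta')^i$ is related to $\eta^i$ by the induced column and value permutations and is therefore equivalent.

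For the inverse $\Psi$, given representatives $\eta^i$ of classes in each $\mathcal{E}_i$ and the isometric embedding $G \hookrightarrow G_1 \cartprod \cdots \cartprod G_n$, define $\Psi(\eta^1, \dots, \eta^n)(v)$ by writing $v \mapsto (v_1, \dots, v_n)$ and concatenating $\eta^1(v_1), \dots, \eta^n(v_n)$. Distances in $G$ equal product distances (isometric subgraph), which decompose as sums over each $G_i$, each equal to the Hamming distance under $\eta^i$; concatenation then yields the total Hamming distance, so $\Psi$ produces a valid Hamming embedding of $G$. Replacing each $\eta^i$ by an equivalent representative changes $\Psi(\eta^1, \dots, \eta^n)$ only by column and value permutations within each block, so $\Psi$ descends to equivalence classes.

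It remains to show $\Phi$ and $\Psi$ are mutual inverses. The composition $\Psi \circ \Phi$ simply reassembles the blocks of the canonical partition, recovering $\eta$ up to a column reshuffle that groups each block together, which is an equivalence. The subtler direction $\Phi \circ \Psi$ is the main obstacle: one must verify that in the concatenation $\Psi(\eta^1, \dots, \eta^n)$ no $\gamma$-relation crosses between coordinates coming from different $\eta^i$, so that the canonical partition recovers the original tuple rather than a coarser grouping. This reduces to the pseudofactorization property that every edge of $G$ corresponds to a single-coordinate change in the product and thus to an edge in exactly one pseudofactor; consequently, all coordinates that change across a fixed edge lie entirely within one $\eta^i$, and no $\gamma$-link can bridge two blocks. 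Invoking Theorem \ref{thm:bijection} once more, the $\gammahrel$-classes of the concatenation coincide with the blocks $\{\eta^i\}$ and are matched to the correct $G_i$, completing the bijection.
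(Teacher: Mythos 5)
Your proposal is correct and follows essentially the same route as the paper's proof: both rest on the canonical partition (Theorems \ref{thm:bijection} and \ref{thm:structure}) in one direction, concatenation through the canonical embedding in the other, and the key observation that any equivalence between embeddings must respect the blocks because each digit changes only across edges of a single $\thetahrel$-class. The only difference is packaging — you assemble an explicit bijection on equivalence classes and verify both compositions, while the paper establishes the two inequalities via a pair of injectivity arguments — but the ingredients are identical.
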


These theorems imply an important structure of any Hamming embedding of a graph $G$: such an embedding must be equivalent to a concatenation of Hamming embeddings for each canonical pseudofactor of $G$. As a consequence, the existence of a Hamming embedding of $G$ implies one for each pseudofactor. The converse is easily shown to be true also. In practice, these results mean that we may recognize graphs that do not permit a Hamming embedding by analyzing the pseudofactors, which may be significantly smaller. They also allow us to extend polynomial-time results for determining Hamming or hypercube embeddability of a graph to graphs whose pseudofactors can be determined as Hamming or hypercube embeddable in polynomial time. These include Cartesian products of line graphs, cycle graphs, graphs with distances in $\{x,y,x+y\}$ for integers $x,y$ at least one odd, and graphs with uniform edge weights \cite{DezaLaurentL1, LaurentFewDistances, Shpectorov}, along with any other graphs later found to be Hamming or hypercube embeddable. Significant additional work remains in characterizing classes of irreducible weighted graphs for which Hamming or hypercube embeddings may be constructed efficiently. However, our work significantly eases the problem of finding Hamming or hypercube embeddings for graphs with nontrivial pseudofactorizations, and is also a step forward in better understanding isometric embeddings of weighted graphs into more complex destination graphs.

\section{Preliminaries \label{sec:preliminaries}}

In this paper, all graphs are finite, connected, and undirected, and we use $V(G)$, $E(G)$, and $w_G : E(G) \to \mathbb{Z}_{>0}$ to denote the vertex set, edge set, and weight function of a graph $G$, respectively. $G$ being unweighted is equivalent to $w_G(e) = 1$ for all $e \in E(G)$. An edge between vertices $u,v \in V(G)$ is written $uv$ or $vu$; since all edges are undirected, $uv \in E(G) \iff vu \in E(G)$. The distance from $u$ to $v$, $u,v \in V(G)$, is the minimum edge weight sum along a path from $u$ to $v$, and is given by the shortest path metric denoted $d_G : V(G) \times V(G) \to \mathbb{Z}_{\ge 0}$.

The following definition is due to Sheridan et al. \cite{sheridan2021factorization}:
\begin{definition}\label{def:minimal-graph}
A graph $G$ is a \textbf{minimal graph} if  and only if every edge in $E(G)$ forms a shortest path between its endpoints. That is, $w_G(uv) = d_G(u,v)$ for all $uv \in E(G)$.
\end{definition}

In this manuscript, we assume all graphs are minimal. Note that any unweighted graph is minimal and that any non-minimal graph may be made minimal by removing any edges not satisfying the condition in Definition \ref{def:minimal-graph}.
 
The Cartesian graph product of one or more graphs $G_1, \dots, G_n$ is written $G = G_1 \times \cdots \times G_n$ or $G = \prod_{i=1}^n G_i$. $G$ is defined as $V(G) = V(G_1) \times \cdots \times V(G_n)$,  with two vertices $(u_1,\dots,u_n)$ and $(v_1,\dots,v_n)$ adjacent if and only if there is exactly one $\ell$ such that $u_\ell v_\ell \in E(G_\ell)$ and $u_i=v_i$ for all $i\ne \ell$, and $w_G(uv) = w_{G_\ell}(u_\ell v_\ell)$ for $\ell$ chosen as above (Figure \ref{fig:he-structure}a). The Cartesian graph product has the following important distance property:
\begin{equation}\label{eqn:cartesian-distance-property}
d_G(u,v) = \sum_{i=1}^n d_{G_i}(u_i, v_i)
\end{equation}
where $u = (u_1,\dots, u_n)$ and $v = (v_1, \dots, v_n)$.
This implies that any path in $G$ can be decomposed into a set of paths in the $G_i$, with the path length in $G$ equal to the sum of the path lengths in the $G_i$.

A graph embedding $\pi : V(G) \to V(G^*)$ of a graph $G$ into a graph $G^*$ maps vertices of $G$ to those of $G^*$. If $\pi$ satisfies $d_G(u,v) = d_{G^*}(\pi(u),\pi(v))$ for all $u,v \in V(G)$, then $\pi$ is an isometric embedding. When such a $\pi$ exists, we say that $G \hookrightarrow G^*$. For convenience, we let $d_\pi(u,v) = d_{G^*}(\pi(u),\pi(v))$. 

When $G$ is isomorphic to an isometric subgraph of a Cartesian graph product (e.g., Figure \ref{fig:he-structure}b), we call the set of multiplicands a pseudofactorization. The following definition is again due to Sheridan et al. \cite{sheridan2021factorization}:
\begin{definition}\label{def:pseudofactorization}
Consider graphs $G$ and $G^* = \prod_{i=1}^n G^*_i$. If an embedding $\pi : V(G) \to V(G^*)$, $\pi = (\pi_1, \dots, \pi_n)$, exists satisfying the following criteria:
\begin{enumerate}
\item $d_G(u,u') = d_{G^*}(\pi(u), \pi(u'))$,
\item $uv \in E(G)$ implies $\pi(u)\pi(v) \in E(G^*)$ and $w_G(uv) = w_{G^*}(\pi(u)\pi(v))$,
\item every vertex in $G^*_i$ is in the image of $\pi_i$, $1 \le i \le n$, and
\item every edge $u_iv_i$ in $G^*_i$ equals $\pi_i(u)\pi_i(v)$ for some $uv \in E(G)$
\end{enumerate}
then we say the set $\{ G^*_1, \dots, G^*_n \}$ is a \textbf{pseudofactorization} of $G$ and refer to each $G^*_i$ as a \textbf{pseudofactor}.
\end{definition}

A graph $G$ is \emph{irreducible} if all its pseudofactorizations include itself as a pseudofactor. A pseudofactorization into irreducible graphs is called an irreducible pseudofactorization. For convenience, we assume that a pseudofactorization does not include $K_1$, except when $G = K_1$.

Informally, the definition of pseudofactorization requires both that $G$ be isometrically embeddable into ${G^*}$ \emph{and} that edges be preserved within this embedding. This second condition is a natural one for manipulating graph structures, but may be less applicable to other situations (e.g., finite metric spaces). The final two conditions ensure that there are no unnecessary vertices and edges in the pseudofactors (or any graph would be a pseudofactor of the graph in question, as $G$ is an isometric subgraph of $G\times H$ for any graph $H$).

Sheridan et al. \cite{sheridan2021factorization} showed that an irreducible pseudofactorization of a weighted graph $G$ is unique. This pseudofactorization is called its \emph{canonical pseudofactorization}, and the corresponding isometric embedding implied by the first condition of Definition \ref{def:pseudofactorization} is its \emph{canonical embedding}. The authors used the Djokovi\'c-Winkler relation $\theta$, which we restate here as has been defined elsewhere \cite{graham1985isometric, sheridan2021factorization}. For a graph $G$, two edges in the graph $uv,ab\in E(G)$ are related by $\theta$ if and only if: 
\begin{align}
    [d_G(u,a)-d_G(u,b)]-[d_G(v,a)-d_G(v,b)] \neq 0. \label{eqn:theta-diff}
\end{align}
We note that $\theta$ is symmetric and reflexive. Let the equivalence relation $\hat\theta$ be the transitive closure of $\theta$.

A constructive algorithm \cite{sheridan2021factorization} proves the existence of the canonical pseudofactorization. This algorithm creates exactly one canonical pseudofactor from each element of $E(G)/\thetahrel$, the set of equivalence classes under $\thetahrel$. We make the following note about the canonical pseudofactors produced by this algorithm:

\begin{thm}
\label{thm:prelim:pseudofactor-path}
Consider the canonical pseudofactorization of $G$, $\{G_1, \dots, G_n\}$, as constructed by Algorithm 1 of \cite{sheridan2021factorization}, and assume each $G_i$ was constructed from equivalence class $E_i \in E(G)/\hat\theta$. Let $\pi = (\pi_1, \dots, \pi_n)$ be the canonical embedding of $G$ into $\prod_{i=1}^n G_i$. For $u,v \in V(G)$ and a path $P$ from $u$ to $v$, let $P_j$ be the subsequence of edges in $P$ that are in $E_j$. Then there is a path in $G_i$ of length equal to the sum of the edge weights in $P_j$.
\end{thm}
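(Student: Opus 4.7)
The plan is to analyze how the canonical embedding $\pi$ projects the path $P$ onto each factor. The key structural fact I will rely on is that for any edge $ab \in E(G)$ belonging to the $\hat\theta$-class $E_j$, the images $\pi(a)$ and $\pi(b)$ differ only in the $j$-th coordinate, and moreover $\pi_j(a)\pi_j(b)$ is an edge of $G_j$ with weight $w_{G_j}(\pi_j(a)\pi_j(b)) = w_G(ab)$. This is essentially the defining property of the canonical pseudofactorization as built by Algorithm 1 of Sheridan et al.: each $G_j$ is constructed from precisely the edges of $E_j$, and the definition of the Cartesian product forces each edge of $G$ to project nontrivially onto exactly one factor, which is the factor indexed by its $\hat\theta$-class.

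Granted this structural fact, the argument is a straightforward walk-tracking computation. Write $P$ as a vertex sequence $u = u_0, u_1, \ldots, u_m = v$. For each consecutive pair $u_{\ell - 1}u_\ell$, either the edge lies in $E_j$, in which case $\pi_j(u_{\ell-1})\pi_j(u_\ell)$ is an edge of $G_j$ of weight $w_G(u_{\ell-1}u_\ell)$, or the edge lies in some $E_k$ with $k \neq j$, in which case $\pi_j(u_{\ell-1}) = \pi_j(u_\ell)$. Consequently the sequence $\pi_j(u_0), \pi_j(u_1), \ldots, \pi_j(u_m)$, after collapsing consecutive repeated vertices, is a walk in $G_j$ from $\pi_j(u)$ to $\pi_j(v)$ whose edges are in bijection with the entries of $P_j$ and whose total weight equals $\sum_{e \in P_j} w_G(e)$. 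This walk is the path claimed by the theorem.

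The main obstacle is justifying the structural fact itself; everything else is bookkeeping. It may be cleanest to isolate it as a preliminary observation: given the construction of $G_j$ from $E_j$, an edge $ab \in E_j$ has $\pi_k(a) = \pi_k(b)$ for all $k \neq j$ (since edges of $G$ must land on single-factor edges of the product by Definition \ref{def:pseudofactorization}, and the labeling of factors by $\hat\theta$-classes matches the labeling of edges). Once this is in place, the Cartesian distance property (Equation \ref{eqn:cartesian-distance-property}) is not even strictly needed — only the edge-correspondence of the Cartesian product — though it provides a sanity check that the weights add up across the factors along any path.
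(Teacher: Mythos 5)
Your proposal is correct and follows essentially the same route as the paper's proof: both rest on the fact (from Algorithm 1) that $\pi_j$ is constant across edges outside $E_j$, and both then concatenate the per-edge contributions along $P$. The only minor variation is that you invoke the edge-preservation condition of Definition \ref{def:pseudofactorization} to map each edge of $P_j$ to a single weight-preserving edge of $G_j$, whereas the paper uses the isometry of $\pi$ together with the Cartesian distance sum to conclude $d_{\pi_j}(a,b)=w_G(ab)$; the two derivations are interchangeable here.
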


\begin{proof}
By inspection of Algorithm 1, two vertices connected by an edge not in $E_j$ will be identified in $G_j$.
Since $\pi$ is an isometric embedding of $G$, $w_G(uv) = d_G(u,v) = \sum_{i=1}^n d_{\pi_i}(u,v) = d_{\pi_j}(u,v)$, where the final equality is because, by the previous sentence, all summands are zero except when $i=j$.
As a result, for each $uv \in P_j$ there is a path from $\pi_j(u)$ to $\pi_j(v)$ of length $w_G(uv)$. Any edges in $P$ but not $P_j$ do not change the value of $\pi_j$. So adjacent edges in $P_j$ share an endpoint, and there is a path of length equal to the sum of the edge weights in $P_j$.
\end{proof}

Finally, we introduce the following notation for our discussions of Hamming graphs. If $H$ is a Hamming graph $H$, then $V(H) = \Sigma^m$, where $\Sigma$ is the alphabet and $m$ the dimension of $H$, and distance in $H$ is given by the Hamming distance between pairs of vertices. $H$ is a hypercube graph if $|\Sigma| = 2$. For Hamming or hypercube graph $H$, an isometric embedding $\eta : V(G) \to V(H)$, is a Hamming or hypercube embedding, respectively. We use subscripts to refer to individual letters in the image of $\eta = (\eta_1, \dots, \eta_m)$ or, equivalently, $\eta = \eta_1\cdots\eta_m$. The indices $[m]$ of $\eta$ are its \emph{coordinates}, where $[m]$ is the first $m$ positive integers, and each $\eta_i$, $1\le i \le m$, is a \emph{digit}. Two for more embeddings $\eta^1, \dots, \eta^n$ of dimensions $m_1, \dots, m_n$ may be concatenated, forming an embedding $\eta = \eta^1 \cdots \eta^n = \eta^1_1 \cdots \eta^1_{m_1} \cdots \eta^n_1 \cdots \eta^n_{m_n}$.

Two Hamming embeddings $\eta$, $\eta'$ are equivalent if there exists a permutation $\sigma$ of the $m$ coordinates and $m$ functions $\beta_i$, $1\le i\le m$, such that $\eta'(u) = \beta_1(\eta_{\sigma(1)}(u))\cdots \beta_m(\eta_{\sigma(m)}(u))$ for all $u \in V(G)$. A Hamming embedding of $G$ is unique if it is equivalent to all other Hamming embeddings of $G$. We assume there are no unnecessary digits in $\eta$ (i.e., every digit changes across some edge), which is analogous to our assumption that $K_1$ is not used in any pseudofactorization.

For any two vertices $u,v \in V(G)$, we define the function $D_\eta$ as follows:
\begin{equation}\label{eqn:coorddiff}
\coorddiff{\eta}{u}{v} = \{ j \in [m] : \eta_j(u) \ne \eta_j(v) \},
\end{equation}
noting that a graph $G$ with Hamming embedding $\eta : V(G) \to V(H)$ will have $d_G(u,v) = d_\eta(u,v) = |D_\eta(u,v)|$. When $uv$ is an edge, we say that the digits indicated by $D_\eta(u,v)$ change across $uv$. We also introduce a relation $\gamma$, which relates two coordinates of an embedding if both corresponding digits change across any edge:
\begin{equation}\label{eqn:gamma}
j \gammarel j' \iff \exists uv \in E(G), \{j,j'\} \subseteq \coorddiff{\eta}{u}{v}.
\end{equation}
Let $\gammahrel$ be the transitive closure of $\gamma$. Since $\gamma$ is symmetric and reflexive, $\gammahrel$ is an equivalence relation. 

A \emph{partition} $\{\eta^1, \dots, \eta^{n}\}$ of an $m$-dimensional Hamming embedding $\eta$ is defined by a partition of its coordinates $[m]$, $\{J_1, \dots, J_{n}\}$, with each $\eta^i$ equal to the projection of $\eta$ onto the coordinates in $J_i$. Let $[m]/\hat\gamma$ be the equivalence classes of $[m]$ under $\hat\gamma$. Then $[m]/\hat\gamma$ defines a partition of $\eta$, which we call its \emph{canonical partition}. This terminology is motivated by Theorem \ref{thm:bijection}, which guarantees a bijection between the canonical partition of $\eta$ and the canonical pseudofactorization of $G$.

\section{Structure of Hamming embeddings of weighted graphs \label{sec:wg-structure}}

In this section, we begin by constructing a bijection between the canonical partition of any Hamming embedding of $G$ and its set of canonical pseudofactors. This result is used to prove the two main findings of the paper: Theorem \ref{thm:structure}, which proves that the canonical partition of a Hamming embedding of $G$ forms a Hamming embedding for each canonical pseudofactor; and Theorem \ref{thm:hamming-iff}, which proves that $G$ permits a Hamming embedding if and only if each of its pseudofactors also permits a Hamming embedding.

\begin{lem}\label{lem:structure:1}
Let $G$ be a weighted graph and $\eta$ a Hamming embedding of $G$. Let $uv, u'v' \in E(G)$, $j \in \coorddiff{\eta}{u}{v}$, and $j' \in \coorddiff{\eta}{u'}{v'}$. Then we have
\[ uv \thetahrel u'v' \iff j \gammahrel j'. \] 
\end{lem}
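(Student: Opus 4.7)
The plan is to prove each implication by first reducing to a single-step claim and then establishing that claim via the coordinate-wise decomposition $d_G(x,y) = \sum_{i=1}^m d_i(x,y)$, where $d_i(x,y) := [\eta_i(x) \neq \eta_i(y)]$ is the $i$-th coordinate's contribution to the Hamming distance.

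For the forward direction, I first show the single-step version: if $uv \thetarel u'v'$ directly, then $\coorddiff{\eta}{u}{v} \cap \coorddiff{\eta}{u'}{v'} \neq \emptyset$. Granting this, picking any $i$ in the intersection gives $j \gammarel i \gammarel j'$, hence $j \gammahrel j'$. To exhibit a common coordinate, I decompose the quantity $\Delta := [d_G(u,u')-d_G(u,v')] - [d_G(v,u')-d_G(v,v')]$ as $\sum_i \phi_i$, where $\phi_i$ is the contribution of coordinate $i$. A direct check shows $\phi_i = 0$ whenever $\eta_i(u) = \eta_i(v)$ (since then $d_i(u,\cdot) = d_i(v,\cdot)$) and symmetrically whenever $\eta_i(u') = \eta_i(v')$; hence only indices in $\coorddiff{\eta}{u}{v} \cap \coorddiff{\eta}{u'}{v'}$ can contribute, and $\Delta \neq 0$ forces the intersection to be nonempty. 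The full case $uv \thetahrel u'v'$ follows by iterating this argument along a $\thetarel$-chain from $uv$ to $u'v'$ and concatenating the resulting $\gammahrel$-relations.

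For the reverse direction, after unfolding a $\gammarel$-chain $j = j_0 \gammarel j_1 \gammarel \cdots \gammarel j_k = j'$ (each consecutive pair witnessed by an edge containing both indices in its $D_\eta$-set), it suffices to prove the following \emph{key claim}: any two edges $e = xy$ and $e' = x'y'$ with $\coorddiff{\eta}{x}{y} \cap \coorddiff{\eta}{x'}{y'} \neq \emptyset$ satisfy $e \thetahrel e'$. I handle the base case of the key claim, where $e = uv$ and $e' = uw$ share an endpoint $u$, directly: expanding $\Delta' := [d_G(u,u) - d_G(u,w)] - [d_G(v,u) - d_G(v,w)] = -d_G(u,w) - w_G(uv) + d_G(v,w)$ coordinate-wise and performing a case analysis on the triple $(\eta_i(u),\eta_i(v),\eta_i(w))$, one finds that each coordinate's contribution is nonpositive and is strictly negative precisely when $i \in \coorddiff{\eta}{u}{v} \cap \coorddiff{\eta}{u}{w}$. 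Since this intersection is nonempty by assumption, $\Delta' < 0$, so $uv \thetarel uw$; the other shared-endpoint configurations are symmetric.

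The main obstacle is extending the key claim to edges $e, e'$ that share a coordinate but no endpoint. I plan to induct on the distance measure $d_G(u,u') + d_G(v,v')$, reducing each nontrivial case to a smaller one by locating a bridging edge that shares both an endpoint and a coordinate with either $e$ or $e'$. The structural fact I intend to exploit is that each fiber $\eta_j^{-1}(a)$ is a convex subset of $G$: along any shortest path between two vertices in the same fiber, $\eta_j$ is constant, because on a shortest path each coordinate changes exactly $d_i(x,y) \in \{0,1\}$ times. Thus when $\eta_j(u) \neq \eta_j(u')$ for the shared coordinate $j$, any shortest $u$-$u'$ path contains a unique edge $f$ with $j \in \coorddiff{\eta}{f_\text{endpts}}{}$ that can serve as a bridge via iterated application of the base case; the complementary case $\eta_j(u) = \eta_j(u')$ reduces to the former by relabeling endpoints (since then $\eta_j(u) \neq \eta_j(v')$, so working with the pair $(uv, v'u')$ suffices). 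Ensuring that each inductive step strictly decreases the measure, so that the induction terminates, is the most delicate step and will likely require careful selection of both the path and the intermediate edge.
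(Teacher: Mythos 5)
Your forward direction and your reduction of the reverse direction to the ``key claim'' (any two edges whose $D_\eta$-sets intersect are $\thetahrel$-related) are correct and coincide with the paper's argument, and your shared-endpoint base case is a valid (slightly different) version of the paper's base case. The genuine gap is the general case of the key claim, for edges sharing no endpoint. The bridging-edge plan does not work as described: the unique edge $f=xy$ on a shortest $u$--$u'$ path where coordinate $j$ changes generally shares \emph{neither} an endpoint nor (beyond $j$ itself) any useful adjacency with $e$ or $e'$, so the base case cannot be applied to the pair $(e,f)$ directly, and the measure $d_G(u,x)+d_G(v,y)$ for the new pair need not be smaller than $d_G(u,u')+d_G(v,v')$, since $y$ may be farther from $v$ than $v'$ is. You flag termination as the delicate point, and that is exactly the step that remains unproved; the ``iterated application of the base case'' along a path fails because consecutive edges of a path share an endpoint but need not share a coordinate.

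The paper closes this case with a different induction worth adopting: induct on the number of edges $l$ of an \emph{arbitrary} walk $P=(u_0=u,\,u_1=v,\dots,u_{l-1}=u',\,u_l=v')$ whose first edge is $uv$ and whose last edge is $u'v'$ (such a walk exists by connectivity; it need not be shortest). Let $S=\coorddiff{\eta}{u}{v}\cap\coorddiff{\eta}{u'}{v'}$. If some intermediate edge $u_{i-1}u_i$ satisfies $\coorddiff{\eta}{u_{i-1}}{u_i}\cap S\ne\emptyset$, split $P$ at that edge; both halves are strictly shorter and their terminal edge pairs still have intersecting $D_\eta$-sets, so induction gives $uv\thetahrel u_{i-1}u_i\thetahrel u'v'$. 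If no intermediate edge meets $S$, then $\eta_s(v)=\eta_s(u')$ for every $s\in S$, and in the coordinatewise expansion of $[d_G(u,u')-d_G(u,v')]-[d_G(v,u')-d_G(v,v')]$ every $s\in S$ contributes at least $+1$ while all other coordinates contribute $0$, so the sum is at least $|S|>0$ and $uv\thetarel u'v'$ holds directly. This decreasing-length induction terminates unconditionally and replaces both your fiber-convexity observation (which is true but not needed) and the unproved measure-decrease step.
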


\begin{proof}
To see that $uv \thetahrel u'v' \Rightarrow j \gammahrel j'$, observe that $j \notgammahrel j'$ implies that $\coorddiff{\eta}{u}{v}$ and $\coorddiff{\eta}{u'}{v'}$ are disjoint. Thus,
\begin{multline*}
\edgediff{G}{u}{v}{u'}{v'} \\
= \sum_{s \in [m]} \edgediff{\eta_s}{u}{v}{u'}{v'} = 0
\end{multline*}
because each summand is nonzero only if both $\eta_s(u) \ne \eta_s(v)$ and $\eta_s(u') \ne \eta_s(v')$. So $j \notgammahrel j' \Rightarrow uv \notthetarel u'v'$, and $uv \thetarel u'v' \Rightarrow j \gammahrel j'$. If instead $uv \thetahrel u'v'$ then some sequence of edges $(e_i)_{i = 1}^l$ satisifes $uv \thetarel e_1 \thetarel \cdots \thetarel e_{l} \thetarel u'v'$, and the transitivity of $\gammahrel$ implies $j \gammahrel j'$.

To prove $j \gammahrel j' \Rightarrow uv \thetahrel u'v'$, let $S = \coorddiff{\eta}{u}{v} \cap \coorddiff{\eta}{u'}{v'}$, the set of all coordinates that change across both $uv$ and $u'v'$. We prove the statement in two cases:

\begin{description}
\item[Case 1 ($|S| > 0$):] Consider any path $P$ with $l$ edges, beginning with $uv$ and ending with $u'v'$, $l \ge 2$. We show by induction on $l$ that $|S| > 0 \Rightarrow uv \thetahrel u'v'$. As a base case, consider $l=2$ and without loss of generality let $P = (u, v=u', v')$. As above, we consider
\begin{multline*}
\edgediff{\eta}{u}{v}{u'}{v'} \\= \sum_{s \in S} \edgediff{\eta_s}{u}{v}{u'}{v'}
\end{multline*}
with the sum restricted to coordinates $s \in S$ for which the summand is nonzero. Because $v=u'$, $\eta_s(v) = \eta_s(u')$, so $d_{\eta_s}(v,u') = 0$ and $d_{\eta_s}(u,u') = d_{\eta_s}(v,v') = 1$. Thus, each summand is at least +1 and the summation is at least $+|S|$ and $|S| > 0 \Rightarrow uv \thetarel u'v'$. Now, consider the case $l > 2$ and $P = (u_0 = u, u_1 = v, \dots, u_{l-1}=u', u_{l}=v')$. If every edge $u_{i-1}u_i$, $1 < i < l$, has $S$ and $ \coorddiff{\eta}{u_{i-1}}{u_i}$ disjoint, then $\eta_s(v) = \eta_s(u')$ for all $s\in S$ and as in the base case we have $|S| > 0 \Rightarrow uv \thetarel u'v'$. If some edge $u_{i-1}u_i$ has $S$ and $\coorddiff{\eta}{u_{i-1}}{u_i}$ not disjoint, then consider the subpaths $P_1 = (u_0, u_1, \dots, u_{i})$ and $P_2 = (u_{i-1}, u_i, \dots, u_l)$, with $l_1$ and $l_2$ edges, respectively. Clearly $2 \le l_1, l_2 < l$, so by induction we conclude that $uv \thetahrel u_{i-1}u_i \thetahrel u'v'$.
\item[Case 2 ($|S| = 0$):] We are given $j \gammahrel j'$, so construct a sequence of coordinates $j_0 = j, \dots, j_l = j'$ such that $j_{i-1} \gammarel j_i$ for all $1 \le i \le l$. For each pair of coordinates $j_{i-1}$, $j_i$ there is some edge $e_i$ across which both $\eta_{j_{i-1}}$ and $\eta_{j_i}$ change. By Case 1, we have $uv \thetahrel e_1$, $e_l \thetahrel u'v'$, and $e_{i-1} \thetahrel e_i$ for $1 < i \le l$. Thus, $uv \thetahrel u'v'$ as desired. 
\end{description}
\end{proof}

\begin{thm}
\label{thm:bijection}
Let $G$ be a weighted graph and $\eta$ an $m$-dimensional Hamming embedding of $G$. Then there is a bijection between $E(G) / \hat\theta$ and $[m] / \hat\gamma$. It follows that there is a natural bijection from the set of canonical pseudofactors of $G$ to the canonical partition of $\eta$.
\end{thm}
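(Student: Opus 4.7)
The plan is to build the bijection directly from Lemma \ref{lem:structure:1}. Define
\[
\Phi : E(G)/\thetahrel \to [m]/\gammahrel, \qquad [uv]_{\thetahrel} \mapsto [j]_{\gammahrel} \text{ for some } j \in \coorddiff{\eta}{u}{v}.
\]
First I would verify that $\Phi$ is well-defined. Since $\eta$ is an isometric embedding of the minimal graph $G$, $|\coorddiff{\eta}{u}{v}| = d_\eta(u,v) = w_G(uv) \ge 1$, so $\coorddiff{\eta}{u}{v}$ is nonempty and such a $j$ always exists. Any two coordinates $j,j' \in \coorddiff{\eta}{u}{v}$ satisfy $j \gammarel j'$ directly from Equation \eqref{eqn:gamma}, hence lie in the same $\gammahrel$-class; so the choice of $j$ within $\coorddiff{\eta}{u}{v}$ is irrelevant. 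Independence from the choice of representative edge is exactly the forward direction of Lemma \ref{lem:structure:1}: if $uv \thetahrel u'v'$ and $j \in \coorddiff{\eta}{u}{v}$, $j' \in \coorddiff{\eta}{u'}{v'}$, the lemma supplies $j \gammahrel j'$.

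Next I would establish injectivity and surjectivity. Injectivity is the reverse direction of Lemma \ref{lem:structure:1}: if $\Phi([uv]_{\thetahrel}) = \Phi([u'v']_{\thetahrel})$, then fixed choices $j \in \coorddiff{\eta}{u}{v}$ and $j' \in \coorddiff{\eta}{u'}{v'}$ satisfy $j \gammahrel j'$, and the lemma returns $uv \thetahrel u'v'$. Surjectivity uses the standing normalization stated in Section \ref{sec:preliminaries} that $\eta$ contains no unnecessary digits: every coordinate $j \in [m]$ changes across some edge $uv \in E(G)$, and then $[uv]_{\thetahrel}$ maps to $[j]_{\gammahrel}$.

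For the second claim I would transport $\Phi$ along the two natural indexings. By Theorem \ref{thm:prelim:pseudofactor-path} and the constructive algorithm of Sheridan et al. \cite{sheridan2021factorization}, the canonical pseudofactorization of $G$ produces exactly one pseudofactor $G_i$ per class $E_i \in E(G)/\thetahrel$; dually, the canonical partition of $\eta$ is by definition indexed by $[m]/\gammahrel$. Composing these two bijective indexings through $\Phi$ yields the asserted natural bijection, which sends the pseudofactor arising from $E_i$ to the subembedding of $\eta$ whose coordinates form the class $\Phi(E_i) \in [m]/\gammahrel$.

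I do not expect any step to be difficult, since the substantive work is already encoded in Lemma \ref{lem:structure:1}. The only subtleties are verifying $\coorddiff{\eta}{u}{v} \ne \emptyset$ (so that $\Phi$ can pick a representative coordinate at all), which uses minimality of $G$, and invoking the no-unnecessary-digits assumption to secure surjectivity. Once $\Phi$ is a bijection of equivalence classes, the bijection on the two structured collections follows immediately from the known correspondences between canonical pseudofactors and $E(G)/\thetahrel$, and between parts of the canonical partition and $[m]/\gammahrel$.
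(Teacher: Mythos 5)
Your proof is correct and follows essentially the same route as the paper: the same map $[uv]_{\thetahrel} \mapsto [j]_{\gammahrel}$ with $j \in \coorddiff{\eta}{u}{v}$, injectivity from the reverse direction of Lemma \ref{lem:structure:1}, surjectivity from the no-unnecessary-digits normalization, and the second claim by composing with the known indexings of pseudofactors by $E(G)/\thetahrel$ and partition parts by $[m]/\gammahrel$. Your explicit well-definedness check (nonemptiness of $\coorddiff{\eta}{u}{v}$ via minimality, and independence of the choices of $j$ and of representative edge via the forward direction of the lemma) is actually more careful than the paper's own write-up, which asserts the map without verifying it.
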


\begin{proof}
We first construct a bijection from $E(G)/\hat\theta$ to $[m]/\hat\gamma$. Let $[uv]_{\hat\theta} \in E(G)/\hat\theta$ contain $uv \in E(G)$ and $[j]_{\hat\gamma} \in [m]/\hat\gamma$ contain $j \in [m]$. Then we consider the mapping $f_2 : [uv]_{\hat\theta} \mapsto [j]_{\hat\theta}$ where $j \in \coorddiff{\eta}{u}{v}$. By Lemma \ref{lem:structure:1}, $f_2$ is injective because for  $j \in \coorddiff{\eta}{u}{v}$ and $j' \in \coorddiff{\eta}{u'}{v'}$, $f_2([uv]_{\hat\theta}) = f_2([u'v']_{\hat\theta})$ implies $[j]_{\hat\gamma} = [j']_{\hat\gamma}$, so $j \gammahrel j'$ and thus $uv \thetahrel u'v'$. $f_2$ is surjective because each digit of $\eta$ changes over some edge. This proves the first assertion of the theorem. For the second assertion, we use the bijection between $E(G)/\hat\theta$ and the canonical pseudofactorization (see Section \ref{sec:preliminaries}) that takes each $G_i$ to the $E_i \in E(G)/\hat{\theta}$ that was used to construct it, where $\{G_1,\dots,G_n\}$ is the canonical pseudofactorization. We also have a bijection $f_3$ from $[m]/\hat\gamma$ to the canonical partition of $\eta$, since the elements of $[m]/\hat\gamma$ were used to form that partition. Thus, the mapping $f = f_3 \circ f_2 \circ f_1$ is a bijection from the canonical pseudofactors of $G$ to the canonical partition of $\eta$.
\end{proof}

\begin{thm}
\label{thm:structure}
Let weighted graph $G$ have canonical pseudofactors $G_1, \dots, G_{n}$, with $\pi : V(G) \to V(\prod_{i=1}^n)$ the canonical embedding of $G$. Let $\eta$ be an $m$-dimensional Hamming embedding of $G$ with canonical partition $\eta^1, \dots, \eta^{n}$. Assume without loss of generality that the natural bijection of Theorem \ref{thm:bijection} maps $G_i$ to $\eta^i$ for each $i, 1 \le i \leq  n$. Then there is an embedding $\tilde{\eta}^i$ implied by $\eta^i = \tilde{\eta}^i \circ \pi_i$, which is a Hamming embedding of $G_i$ for each $i, 1 \le i \leq  n$.
\end{thm}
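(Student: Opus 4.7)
The plan is to prove this in two stages: first that $\tilde\eta^i$ is well-defined as a function on $V(G_i)$ (that is, $\pi_i(u) = \pi_i(v) \Rightarrow \eta^i(u) = \eta^i(v)$), and second that the resulting map is an isometry, hence a Hamming embedding of $G_i$ (the hypercube case being identical). The structural fact that drives both steps is the following consequence of Lemma \ref{lem:structure:1} combined with the explicit bijection built in the proof of Theorem \ref{thm:bijection}: if $E_i \in E(G)/\thetahrel$ is the class from which $G_i$ is built and $J_i \in [m]/\gammahrel$ is the set of coordinates indexing $\eta^i$, then for every edge $xy \in E(G)$ we have $\coorddiff{\eta}{x}{y} \subseteq J_i$ when $xy \in E_i$, and $\coorddiff{\eta}{x}{y} \cap J_i = \emptyset$ when $xy \notin E_i$. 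In other words, digits of $\eta^i$ change across exactly the edges of $E_i$, and across each such edge $xy$ they change by $|\coorddiff{\eta}{x}{y}| = w_G(xy)$.

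For well-definedness, suppose $\pi_i(u) = \pi_i(v)$, so $d_{G_i}(\pi_i(u), \pi_i(v)) = 0$. For any shortest $u$-$v$ path $P$ in $G$, the Cartesian distance property \eqref{eqn:cartesian-distance-property} together with Theorem \ref{thm:prelim:pseudofactor-path} forces the $E_i$-subsequence $P_i$ to have total weight zero; since edge weights are positive, $P$ uses no edge of $E_i$. By the structural fact above, no coordinate in $J_i$ changes as $P$ is traversed, hence $\eta^i(u) = \eta^i(v)$. Surjectivity of $\pi_i$ (Definition \ref{def:pseudofactorization}, condition 3) then extends $\tilde\eta^i$ to a total map $V(G_i) \to \Sigma^{|J_i|}$.

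For the isometry, fix $u,v \in V(G)$ and a shortest $u$-$v$ path $P$. Counting $\eta^i$-digit changes edge by edge along $P$ using the structural fact gives exactly $\sum_{xy \in P_i} w_G(xy) = w(P_i)$, which by Theorem \ref{thm:prelim:pseudofactor-path} (applied to a shortest path) equals $d_{G_i}(\pi_i(u), \pi_i(v))$. The Hamming triangle inequality then yields $d_{\eta^i}(u,v) \le d_{G_i}(\pi_i(u), \pi_i(v))$. Summing over $i$, and using that $\eta$ is a Hamming embedding with canonical partition $\eta^1,\dots,\eta^n$ together with \eqref{eqn:cartesian-distance-property},
\[
d_G(u,v) = \sum_{i=1}^n d_{\eta^i}(u,v) \le \sum_{i=1}^n d_{G_i}(\pi_i(u), \pi_i(v)) = d_G(u,v),
\]
so equality must hold term by term, giving $d_{\tilde\eta^i}(\pi_i(u), \pi_i(v)) = d_{G_i}(\pi_i(u), \pi_i(v))$ as required.

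The step I expect to require the most care is the structural fact itself: Lemma \ref{lem:structure:1} delivers the element-level equivalence $uv\,\thetahrel\,u'v' \iff j\,\gammahrel\,j'$, but I need the set-level statement $\coorddiff{\eta}{x}{y} \subseteq J_i$. To obtain it I would invoke the reflexive instance $xy\,\thetahrel\,xy$ of the lemma to see that all coordinates in $\coorddiff{\eta}{x}{y}$ are mutually $\gammahrel$-related, hence lie in a single $\gammahrel$-class, and then identify that class as $J_i$ via the definition of the bijection in the proof of Theorem \ref{thm:bijection}. Once this is nailed down, the path-accounting that comprises the rest of the argument is routine.
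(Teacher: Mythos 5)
Your proposal is correct and takes essentially the same route as the paper's proof: both rest on the structural fact that the digits of $\eta^i$ change exactly across the edges of $E_i$ (and by $w_G(e)$ of them per edge), combine Theorem \ref{thm:prelim:pseudofactor-path} with the sandwich argument $\sum_i d_{\eta^i}(u,v) = \sum_i d_{G_i}(\pi_i(u),\pi_i(v)) = d_G(u,v)$ to force termwise equality along a shortest path, and then obtain $\tilde\eta^i$ from well-definedness. The one imprecision is that Theorem \ref{thm:prelim:pseudofactor-path} alone only yields $d_{G_i}(\pi_i(u),\pi_i(v)) \le w(P_i)$ rather than the equality you assert at that point; the equality needs the same summation trick you deploy immediately afterward (and which the paper spells out), so nothing is actually missing --- and you are in fact more explicit than the paper about why $\tilde\eta^i$ is well-defined.
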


\begin{proof}
For convenience, let $\pi = (\pi_1, \dots, \pi_{n})$, so that the $\pi_i$ are embeddings of $G$ into each canonical pseudofactor.

Fix any two vertices $u,v \in V(G)$ and consider a shortest path $P$ from $u$ to $v$.
For each pseudofactor $G_i$, let $E_i$ be the equivalence class under $\hat\theta$ from which $G_i$ was generated and let $c_i$ be the sum of the edge weights for edges along $P$ that are in $E_i$. By Theorem \ref{thm:prelim:pseudofactor-path}, $P$ implies a path in $G_i$ of length $c_i$, each $d_{\pi_i}(u,v) \le c_i$. Further, each edge along $P$ contributes to exactly one $c_i$, so we have 
\[ d_G(u,v) = \sum_{i=1}^{n} c_i = \sum_{i=1}^{n} d_{\pi_i}(u,v), \]
where the second equality is due to the fact that $\pi$ is an isometric embedding. Thus each $d_{\pi_i}(u,v) = c_i$.
Now take $\eta^i$, for which we know, based on the construction of Theorem \ref{thm:bijection}, that $w_G(e)$ digits change across any edge $e \in E_i$ and no digits change across any other edge. Then $d_{\eta^i}(u,v) \le c_i$. In fact, we have
\[ d_G(u,v) = \sum_{i=1}^n c_i = \sum_{i=1}^{n} d_{\eta^i}(u, v), \]
where again the second equality is because $\eta$ is an isometric embedding. As before, this implies that $d_{\eta^i}(u, v) = c_i$. 
Thus, $d_{\eta^i}(u, v) = d_{\pi_i}(u,v)$ as desired.

We construct a Hamming embedding $\tilde\eta^i$ for $G_i$ that satisfies $\eta^i = \tilde\eta^i \circ \pi_i$. $\pi_i$ is not unambiguously invertible because multiple nodes in $V(G)$ may map to the same $u_i$ in $G_i$. However, we may let $\pi_i^{-1} : V(G_i) \to V(G)$ map $u_i \in V(G_i)$ to any $u \in V(G)$ such that $\pi_i(u) = u_i$. Let $\tilde\eta^i = \eta^i \circ \pi_i^{-1}$. Then for $u_i,v_i \in V(G_i)$ such that $\pi^{-1}_i(u_i) = u$ and $\pi^{-1}_i(v_i) = v$,
$\tilde\eta^i(u_i) = \eta^i(\pi_i^{-1}(u_i)) = \eta^i(u)$, and so $d_{\tilde\eta^i}(u_i,v_i) = d_{\eta^i}(u,v) = d_{\pi_i}(u,v) = d_{G_i}(\pi_i(u), \pi_i(v)) = d_{G_i}(u_i,v_i)$. Thus, $\tilde\eta^i$ is a Hamming embedding of $G_i$.
\end{proof}

\begin{thm}
\label{thm:hamming-iff}
Let $G$ be a minimal graph with canonical pseudofactors $G_1, \dots, G_n$. Then $G$ is Hamming embeddable if and only if each $G_i$ is Hamming embeddable.
\end{thm}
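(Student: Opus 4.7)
The plan is to prove the two directions separately, with the forward direction being essentially a corollary of Theorem \ref{thm:structure} and the reverse direction being a constructive concatenation argument.

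For the forward direction, suppose $G$ is Hamming embeddable with Hamming embedding $\eta$. Let $\eta^1,\dots,\eta^n$ be the canonical partition of $\eta$. By Theorem \ref{thm:bijection}, the canonical partition of $\eta$ is naturally indexed by the canonical pseudofactors $G_1,\dots,G_n$, so without loss of generality $\eta^i$ corresponds to $G_i$ for each $i$. Theorem \ref{thm:structure} then immediately gives a Hamming embedding $\tilde\eta^i$ of each $G_i$, completing this direction.

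For the reverse direction, suppose each canonical pseudofactor $G_i$ admits a Hamming embedding $\tilde\eta^i$ of dimension $m_i$ over some alphabet $\Sigma_i$. Let $\pi = (\pi_1,\dots,\pi_n)$ be the canonical embedding of $G$ into $\prod_{i=1}^n G_i$. I would define a candidate Hamming embedding $\eta$ of $G$ by concatenation, setting
\[ \eta(u) = (\tilde\eta^1\circ\pi_1)(u)\,\cdots\,(\tilde\eta^n\circ\pi_n)(u) \]
for every $u\in V(G)$, taken over the alphabet $\Sigma = \bigcup_{i=1}^n \Sigma_i$ (relabeling letters if necessary so that distinct blocks use disjoint alphabets, which preserves Hamming distance). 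This is an embedding into a Hamming graph of dimension $m = \sum_i m_i$.

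To verify isometry, fix any $u,v\in V(G)$. Since $\eta$ is defined blockwise and Hamming distance is additive across disjoint coordinate blocks,
\[ d_\eta(u,v) = \sum_{i=1}^n d_{\tilde\eta^i}(\pi_i(u),\pi_i(v)) = \sum_{i=1}^n d_{G_i}(\pi_i(u),\pi_i(v)), \]
where the second equality uses that each $\tilde\eta^i$ is a Hamming embedding of $G_i$. By the distance property of Cartesian products (equation \eqref{eqn:cartesian-distance-property}) and the fact that $\pi$ is an isometric embedding of $G$ into $\prod_i G_i$, the right-hand side equals $d_G(u,v)$. Hence $\eta$ is a Hamming embedding of $G$.

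I do not expect any real obstacle here: the forward direction is delivered by the heavy machinery already established in Theorem \ref{thm:structure}, and the reverse direction is the straightforward concatenation that Cartesian-product distance additivity was built to handle. The only mild care required is to ensure the alphabets of the concatenated blocks are treated as disjoint so that Hamming distance decomposes as a sum over blocks; this is a purely notational matter.
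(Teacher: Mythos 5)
Your proposal is correct and follows essentially the same route as the paper: the forward direction invokes Theorem \ref{thm:structure} on the canonical partition, and the reverse direction concatenates the embeddings $\tilde\eta^i \circ \pi_i$ and uses the isometry of the canonical embedding $\pi$ together with the blockwise additivity of Hamming distance. (Your caveat about making the alphabets disjoint is harmless but unnecessary, since Hamming distance decomposes over coordinate blocks regardless of whether the blocks share letters.)
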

\begin{proof}
Let $\pi : V(G) \to \prod_{i=1}^n V(G_i)$ be the canonical embedding of $G$, with $\pi = (\pi_1, \dots, \pi_n)$. 

If $G$ is Hamming embeddable then by Theorem \ref{thm:structure} we may construct Hamming embeddings of each pseudofactor. 

If every pseudofactor is Hamming embeddable, then let $\tilde\eta^i$ be a Hamming embedding for pseudofactor $G_i$. Let $\eta^i = \tilde\eta^i \circ \pi_i$. If $\pi_i(u) = u_i$ and $\pi_i(v) = v_i$, we have $\eta^i(u) = \tilde\eta^i(\pi_i(u)) = \tilde\eta^i(u_i)$, so $d_{\eta^i}(u,v) = d_{\tilde\eta^i}(u_i,v_i) = d_{G_i}(u_i, v_i)$. So we may concatenate the $\eta^i$ to form $\eta$ such that
\begin{align}
d_{\eta}(u,v) &= \sum_{i=1}^n d_{\eta^i}(u,v) \\
&= \sum_{i=1}^n d_{G_i}(u_i,v_i) \\
&= d_{G}(u,v)
\end{align}
where the final equality is due to the fact that $\pi$ is an isometric embedding of $G$ into $\prod_{i=1}^n G_i$. Thus, $\eta$ is a Hamming embedding of $G$.
\end{proof}

\begin{cor}
\label{cor:hypercube-iff}
A weighted graph $G$ is hypercube embeddable if and only if each of its canonical pseudofactors is hypercube embeddable.
\end{cor}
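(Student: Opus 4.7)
The plan is to observe that a hypercube is exactly a Hamming graph whose alphabet $\Sigma$ satisfies $|\Sigma|=2$, so every hypercube embedding is a Hamming embedding with the additional property that each digit takes only two values. The corollary will then follow by verifying that both directions of Theorem \ref{thm:hamming-iff} preserve this alphabet restriction, so the proof reduces to a short bookkeeping argument on top of the Hamming result.

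For the forward direction, I would assume $\eta$ is a hypercube embedding of $G$ and invoke Theorem \ref{thm:structure} to obtain, for each canonical pseudofactor $G_i$, a Hamming embedding $\tilde\eta^i = \eta^i \circ \pi_i^{-1}$, where $\eta^i$ is the projection of $\eta$ onto the coordinates in the canonical partition class corresponding to $G_i$. Because each digit of $\eta^i$ is literally a digit of $\eta$, its values lie in the same binary alphabet. Composition with $\pi_i^{-1}$ only rearranges the preimage, so $\tilde\eta^i$ also has binary digits, and is therefore a hypercube embedding of $G_i$.

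For the reverse direction, I would mimic the construction in the second half of the proof of Theorem \ref{thm:hamming-iff}. Given hypercube embeddings $\tilde\eta^i$ of each $G_i$, define $\eta^i = \tilde\eta^i \circ \pi_i$ and concatenate the $\eta^i$ to form $\eta$. Precomposition with $\pi_i$ does not alter the image alphabet, and the concatenation of binary-valued embeddings is itself binary-valued, so $\eta$ lies in a hypercube. The distance equalities are exactly those already checked in the proof of Theorem \ref{thm:hamming-iff}, so $\eta$ is a hypercube embedding of $G$.

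The main obstacle is essentially a sanity check rather than a technical one: all four operations used in the proofs of Theorems \ref{thm:structure} and \ref{thm:hamming-iff} (projection onto a subset of coordinates, precomposition with $\pi_i^{-1}$, precomposition with $\pi_i$, and concatenation of embeddings) act on strings in a way that neither enlarges nor shrinks the per-coordinate alphabet. Once this observation is made explicit, the corollary is immediate from Theorem \ref{thm:hamming-iff}.
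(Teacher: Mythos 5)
Your proposal is correct and follows essentially the same route as the paper: the paper's proof also observes that each element of the canonical partition of a hypercube embedding is itself a hypercube embedding (the forward direction via Theorem \ref{thm:structure}) and that concatenating hypercube embeddings of the pseudofactors yields a hypercube embedding of $G$. Your explicit bookkeeping about the four alphabet-preserving operations is just a more detailed spelling-out of what the paper leaves implicit.
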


\begin{proof}
If $G$ has a hypercube embedding $\eta$, then each element of the canonical partition of $\eta$ is also a hypercube embedding. Thus each canonical pseudofactor of $G$ is hypercube embeddable because it has a hypercube embedding formed from an element of the canonical partition of $\eta$. Conversely, if each canonical pseudofactor of $G$ has a hypercube embedding, these may be concatenated to form a hypercube embedding of $G$.
\end{proof}

\begin{cor}\label{cor:number-embeddings}
Given a weighted graph $G$, the number of non-equivalent Hamming embeddings of $G$ is the product of the number of non-equivalent Hamming embeddings of each of its pseudofactors. Similarly, the number of non-equivalent hypercube embeddings of $G$ is the product of the number of non-equivalent hypercube embeddings of each of its pseudofactors.
\end{cor}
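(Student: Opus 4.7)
The plan is to establish a bijection $\Phi$ between equivalence classes of Hamming embeddings of $G$ and $n$-tuples of equivalence classes of Hamming embeddings of the canonical pseudofactors $G_1, \dots, G_n$. The product formula then follows by counting. The hypercube statement follows by the identical argument, since the canonical partition of a hypercube embedding consists of hypercube embeddings and concatenation preserves a binary alphabet.

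First I would define $\Phi$: given a Hamming embedding $\eta$ of $G$, Theorem~\ref{thm:structure} produces Hamming embeddings $\tilde\eta^1, \dots, \tilde\eta^n$ of the pseudofactors, indexed by the bijection of Theorem~\ref{thm:bijection}, and I set $\Phi([\eta]) = ([\tilde\eta^1], \dots, [\tilde\eta^n])$. The inverse map $\Psi$ is concatenation, as in the proof of Theorem~\ref{thm:hamming-iff}: given $(\tilde\eta^1, \dots, \tilde\eta^n)$, form $\eta^i = \tilde\eta^i \circ \pi_i$ and concatenate into $\eta$. That $\Psi$ respects equivalence is immediate, since applying each component equivalence separately yields an equivalence of the concatenations.

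The more delicate step is verifying that $\Phi$ is well-defined. Suppose $\eta \sim \eta'$ via a coordinate permutation $\sigma$ and digit relabelings $\beta_j$. Since $\gamma$ is defined intrinsically from how digits change across edges of $G$, $\sigma$ carries $\hat\gamma$-classes of $\eta$ to $\hat\gamma$-classes of $\eta'$. Moreover, Theorem~\ref{thm:bijection} identifies both sets of classes with the fixed set $E(G)/\hat\theta$ via the same combinatorial recipe, so $\sigma$ must send the $\hat\gamma$-class associated with pseudofactor $G_i$ in $\eta$ to the one associated with $G_i$ in $\eta'$. Restricting $\sigma$ and the corresponding $\beta_j$'s to this class exhibits an equivalence $\tilde\eta^i \sim \tilde\eta'^i$ for every $i$.

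Finally I would check that the two compositions are identities. The relation $\Psi \circ \Phi = \mathrm{id}$ follows because concatenating the canonical partition pieces of $\eta$ recovers $\eta$ up to a within-partition permutation of coordinates, which is an equivalence. For $\Phi \circ \Psi = \mathrm{id}$ I need to verify that the canonical partition of a concatenation $\eta = \eta^1 \cdots \eta^n$ coincides with the given pieces: coordinates in $\eta^i$ and $\eta^j$ for $i \ne j$ cannot be $\hat\gamma$-related because $\eta^i$'s coordinates change only across edges in the $\hat\theta$-class $E_i$, so Lemma~\ref{lem:structure:1} applies; conversely, coordinates within a single $\eta^i$ all lie in one $\hat\gamma$-class since $G_i$ is irreducible and Theorem~\ref{thm:bijection} applied to $G_i$ yields exactly one class for $\tilde\eta^i$. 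The main obstacle is this interplay between the equivalence relation on embeddings and the canonical partition structure; once it is handled, the product formula drops out of the bijection.
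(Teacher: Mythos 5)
Your proposal is correct and follows essentially the same route as the paper: the paper phrases the count as two inequalities (equivalences of the component embeddings concatenate to an equivalence of $G$-embeddings, and an equivalence of concatenated embeddings must respect the canonical partition because the corresponding digits change across the same edges), which are precisely the well-definedness checks for your $\Psi$ and $\Phi$. Your explicit bijection packaging, including the verification that the canonical partition of a concatenation recovers the given blocks, is just a more structured organization of the same argument.
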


\begin{proof}
Let $\pi$ be the canonical embedding of $G$, $\pi : V(G) \to V(\prod_{i=1}^n G_i)$, and $\pi = (\pi_1, \dots, \pi_{n})$.

Take any two non-equivalent Hamming embeddings $\eta$ and $\zeta$ of $G$. 
Each element $\eta^i$ of the canonical partition of $\eta$ corresponds to a Hamming embedding $\tilde\eta^i$ of $G_i$, with $\eta^i = \tilde\eta^i \circ \pi_i$. 
This is similarly true of $\zeta$. Now if, for all $1 \le i \le n$, $\tilde\eta^i$ is equivalent to $\tilde\zeta^i$, then each $\tilde\eta^i$ can be made identical to $\tilde\zeta^i$ by permuting coordinates and coordinate values. 
Because $\eta^i = \tilde\eta^i \circ \pi_i$ and $\zeta^i = \tilde\zeta^i \circ \pi_i$, this implies that $\eta$ can be made identical to $\zeta$ by the same process, so $\eta$ and $\zeta$ are equivalent. 
Thus, non-equivalent $\eta$ and $\zeta$ must have some $\tilde\eta^i$ not equivalent to $\tilde\zeta^i$. 
That is, any non-equivalent $\eta$ and $\zeta$ will correspond to Hamming embeddings of the $G_i$ that are distinct under equivalence, so the product of the number of non-equivalent Hamming embeddings of the $G_i$ is at least the number of non-equivalent Hamming embeddings of $G$.

Now let $\tilde\eta^1, \dots, \tilde\eta^{n}$ and $\tilde\zeta^1, \dots, \tilde\zeta^{n}$ be such that each $\tilde\eta^i$ and $\tilde\zeta^i$ are Hamming embeddings of $G_i$. Let $\eta^i = \tilde\eta^i \circ \pi_i$ and $\zeta^i = \tilde\zeta^i \circ \pi_i$, and consider the concatenations of the $\eta^i$ and the $\zeta^i$, $\eta = \eta^1\cdots\eta^{n}$ and $\zeta = \zeta^1\cdots\zeta^{n}$. Note that $\eta$ and $\zeta$ form Hamming embeddings of $G$. If $\eta$ and $\zeta$ are equivalent, then $\eta$ may be made identical to $\zeta$ by permuting coordinates and coordinate values. Note that any such permutation must map each coordinate in $\eta^i$ to a coordinate in $\zeta^i$, because the corresponding digits necessarily change across the same edges. This allows the permutation from $\eta$ to $\zeta$ to be decomposed into permutations from $\eta^i$ to $\zeta^i$, so each $\tilde\eta^i$ and $\tilde\zeta^i$ must be equivalent. 
From this, we conclude that, for any $1 \le i \leq  n$, if $\tilde\eta^i$ and $\tilde\zeta^i$ are not equivalent then their corresponding $\eta$ and $\zeta$ generated by the above process are also not equivalent. This indicates that the number of non-equivalent Hamming embeddings of $G$ is not less than the product of the number of non-equivalent Hamming embeddings of the $G_i$. This completes the proof.

This can be proven identically for counting hypercube embeddings, noting that if $\eta$ and $\zeta$ are hypercube embeddings then each element of their canonical partitions is also a hypercube embedding.
\end{proof}

\section{Conclusion}\label{sec:conclusion}

Weighted graphs are capable of representing a much richer variety of structures than unweighted graphs. Yet the added complexity of weighted graphs has made it difficult to develop a unified, general theory for understanding isometric embeddings of weighted graphs into various destination graphs of interest, such as unweighted Hamming graphs. Here, we have expanded our understanding of isometric embeddings by relating the pseudofactorization of a minimal weighted graph to its Hamming embeddings, proving that every Hamming embedding is simply a concatenation (or equivalent to a concatenation) of Hamming embeddings for each pseudofactor. Although a polynomial-time algorithm for deciding hypercube embeddability is unlikely to exist \cite{Chvatal,karzanov,AvisDeza}, this eases the task of finding Hamming embeddings, and of proving their non-existence, in cases where the graph has multiple non-trivial pseudofactors. Future work may further characterize the classes of graphs for which we can decide Hamming or hypercube embeddability in polynomial time.

We hope this work also spurs investigation into isometric embeddings of weighted graphs into arbitrary unweighted Cartesian graph products, as opposed to Hamming graphs specifically. In the context of Hamming embeddable graphs, our results imply a hierarchical decomposition of a minimal weighted graph into a Cartesian graph product, whereby a graph may be decomposed first into a Cartesian product of weighted pseudofactors and then each pseudofactor may be individually decomposed into a Cartesian product of (unweighted) complete graphs. An intuitive extension beyond Hamming graphs would be to establish a similar hierarchy for all isometric embeddings of minimal weighted graphs. Such a hierarchy would be an appealing and exciting result, but remains to be proven.

\section*{Acknowledgments}

M.B. and J.B. were supported by the Office of Naval Research (N00014-21-1-4013), the Army Research Office (ICB Subaward KK1954), and the National Science Foundation (CBET-1729397 and OAC-1940231). M.B., J.B., and K.S. were supported by the National Science Foundation (CCF-1956054). Additional funding to J.B. was provided through a National Science Foundation Graduate Research Fellowship (grant no. 1122374). A.C. was supported by a Natural Sciences and Engineering Research Council of Canada Discovery Grant. V.V.W. was supported by the National Science Foundation (CAREER Award 1651838 and grants CCF-2129139 and CCF-1909429), the Binational Science Foundation (BSF:2012338), a Google Research Fellowship, and a Sloan Research Fellowship.

\section*{Declarations of Interest}
The authors declare that they have no competing interests.




\bibliographystyle{plain} 
\bibliography{main.bib}





\end{document}